\def\colorful{0}
\newif\ifhyper\IfFileExists{hyperref.sty}{\hypertrue}{\hyperfalse}
\ifhyper\usepackage{hyperref}\fi
\def\nnewcolor{1}
\newcommand{\new}[1]{{\color{red} #1}}
\newcommand{\newblue}[1]{{\color{blue} #1}}
\newcommand{\new}[1]{{#1}}
\newcommand{\newblue}[1]{{#1}}
\newtheorem{theorem}{Theorem}[section]
\newtheorem{lemma}[theorem]{Lemma}
\newtheorem{informal theorem}[theorem]{Theorem (informal statement)}
\newtheorem{assm}[theorem]{Assumption}
\newtheorem{remark}[theorem]{Remark}
\theoremstyle{definition}
\newtheorem{definition}[theorem]{Definition}
\newcommand{\p}{\mathbf{P}}
\newcommand{\R}{\mathbb{R}}
\newcommand{\Z}{\mathbb{Z}}
\newcommand{\eps}{\epsilon}
\renewcommand{\Pr}{\mathbf{Pr}}
\newcommand{\poly}{\mathrm{poly}}
\title{Non-Gaussian Component Analysis via Lattice Basis Reduction}
\author{
Ilias Diakonikolas\thanks{Supported by NSF Medium Award CCF-2107079,
NSF Award CCF-1652862 (CAREER), a Sloan Research Fellowship, and
a DARPA Learning with Less Labels (LwLL) grant. Some of this work was performed while the author
was visiting the Simons Institute for the Theory of Computing.}\\
University of Wisconsin-Madison\\
{\tt ilias@cs.wisc.edu}\\
\and
Daniel M. Kane\thanks{Supported by NSF Medium Award CCF-2107547,
NSF Award CCF-1553288 (CAREER), and a Sloan Research Fellowship.}\\
University of California, San Diego\\
{\tt dakane@cs.ucsd.edu}\\
}
\begin{document}

\maketitle

\begin{abstract}
Non-Gaussian Component Analysis (NGCA) is the following distribution learning problem:
Given i.i.d.\ samples from a distribution on $\R^d$ that is non-gaussian in a hidden direction
$v$ and an independent standard Gaussian in the orthogonal directions, the goal is to approximate
the hidden direction $v$. Prior work~\cite{DKS17-sq} provided formal evidence 
for the existence of an information-computation tradeoff for NGCA 
under appropriate moment-matching conditions on the univariate non-gaussian distribution $A$. 
The latter result does not apply when the distribution $A$ is discrete.
A natural question is whether information-computation tradeoffs persist in this setting. 
In this paper, we answer this question in the negative 
by obtaining a sample and computationally efficient algorithm for NGCA 
in the regime that $A$ is discrete or nearly discrete, in a well-defined technical sense. 
The key tool leveraged in our algorithm is the LLL method~\cite{LLL82}
for lattice basis reduction. 
\end{abstract}

\setcounter{page}{0}

\thispagestyle{empty}

\newpage

\section{Introduction} \label{sec:intro}

\subsection{Background and Motivation} \label{ssec:background}

\paragraph{Non-Gaussian Component Analysis.}
Non-gaussian component analysis (NGCA) is a distribution learning problem
modeling the natural task of finding ``interesting'' directions in high-dimensional data.
As the name suggests, the objective is to find a ``non-gaussian'' direction (or, more generally, low-dimensional
subspace) in a high-dimensional dataset, under a natural generative model.
NGCA was defined in~\cite{JMLR:blanchard06a} and subsequently studied
from an algorithmic standpoint in a number of works,
see, e.g.,~\cite{VX11, TanV18, GoyalS19} and references therein.

For concreteness, we start by defining the relevant family of high-dimensional distributions.

\begin{definition} [High-Dimensional Hidden Direction Distribution] \label{def:pv-hidden}
{For a distribution $A$ on the real line with probability density function $A(x)$ and}
 a unit vector $v \in \R^d$, consider the distribution over $\R^d$ with probability density function
$\p^{A}_v(x) = A(v \cdot x) \exp\left(-\|x - (v \cdot x) v\|_2^2/2\right)/(2\pi)^{(d-1)/2}.$
That is, $\p^{A}_v$ is the product distribution whose orthogonal projection
onto the direction of $v$ is $A$,
and onto the subspace perpendicular to $v$
is the standard $(d-1)$-dimensional normal distribution.
\end{definition}

The NGCA learning problem is the following: Given i.i.d.\ samples from a distribution
$\p^{A}_v$ on $\R^d$, where the direction $v$ is unknown, find (or approximate) $v$. The standard
formulation assumes that the univariate distribution $A$ is known to the algorithm,
it matches its first $k$ moments with $N(0, 1)$, for some $k \in \Z_+$,
and there is a non-trivial difference in the moment of order $(k+1)$.

\paragraph{Information-Computation Tradeoffs for NGCA.}

\new{Since $A$ has its $(k+1)^{th}$ moment differing from that of a standard Gaussian,
a moment computation on $\p^{A}_v$ allows us to approximate $v$ in roughly $O(d^{k+1})$ samples and time.}
Interestingly, ignoring computational considerations, the NGCA problem can usually be solved with $O(d)$ samples.
Perhaps surprisingly, the aforementioned simple method \new{(requiring $\Omega(d^{k+1})$ samples)}
is qualitatively the best known \new{sample-polynomial time} algorithm for the problem.
\new{Given this state of affairs, it is natural to ask} 
whether this information-computation gap is inherent for the problem itself.

In prior work,~\cite{DKS17-sq} provided formal evidence for the existence of an {\em information-computation
tradeoff} for NGCA under appropriate assumptions on the \new{univariate non-gaussian} distribution $A$.
The~\cite{DKS17-sq} result holds for a restricted model of computation,
known as the Statistical Query (SQ) model.
Statistical Query (SQ) algorithms are the class of algorithms
that are only allowed to query expectations of bounded functions
of the underlying distribution rather than directly access samples.
The SQ model was introduced by Kearns~\cite{Kearns:98}
and has been extensively studied in learning theory.
A recent line of work, see, e.g.,~\cite{FeldmanGRVX17, FeldmanPV15, FeldmanGV17},
generalized the SQ framework for search problems over distributions.
The reader is referred to~\cite{Feldman16b} for a survey.

In more detail, the SQ lower bound of~\cite{DKS17-sq}
applies even for the (easier) hypothesis testing version of NGCA, where the goal is to distinguish
between the standard Gaussian $N(0, I)$ on $\R^d$ and a planted distribution $\p_v^A$, for a hidden direction $v$.
(Hardness for hypothesis testing can easily be used to derive hardness for the corresponding search problem.)
Roughly speaking, they established the following generic SQ-hardness result:

\begin{quote}
{\bf Informal Theorem~\cite{DKS17-sq}:}
Let $A$ be a one-dimensional distribution that matches its first
$k$ moments with the standard Gaussian $G = N(0,1)$
and its chi-squared norm with $G$, $\chi^2(A,G)$, is \new{finite}.
Suppose we want to distinguish between $N(0, I)$ on $\R^d$ and the
distribution $\p^A_v$ for a random direction $v$.
Then any SQ algorithm for this testing task requires either
at least $d^{\Omega(k)} /\chi^2(A,G)$ samples or at least $2^{d^{\Omega(1)}}$ time.
\end{quote}

\new{A concrete application of the above result, given in~\cite{DKS17-sq}, is
an SQ lower bound for the classical problem of learning mixtures of high-dimensional Gaussians.
To obtain the hard family of instances, we take the one-dimensional distribution $A$
be a mixture of univariate Gaussians $\sum_{i=1}^k w_i N(\mu_i, \sigma^2)$
with pairwise separated and bounded means $\mu_i$
and common variance $\sigma^2 = 1/\poly(k)$
such that $A$ matches $\Omega(k)$ moments with $N(0, 1)$.
Moreover, $A$ will have total total variation distance at least $1/2$ from $N(0, 1)$.
Then, each distribution $\p_v^A$ will look like a collection of $k$ ``parallel pancakes'',
in which the means lie on a line (corresponding to the smallest eigenvalue
of the identical covariance matrices of the components). The orthogonal directions
will have an eigenvalue of one, which is much larger than the smallest eigenvalue.
}

More broadly, the aforementioned generic SQ lower bound~\cite{DKS17-sq}
has been the basis for a host of new and near-optimal 
information-computation tradeoffs (in the SQ model)
for high-dimensional estimation tasks, including 
robust mean and covariance estimation~\cite{DKS17-sq},
robust sparse mean estimation~\cite{DKS17-sq},
adversarially robust learning~\cite{BPR18},
robust linear regression~\cite{DKS19}, list-decodable estimation~\cite{DKS18-list, DKPPS21},
learning simple neural networks~\cite{DiakonikolasKKZ20}, and robust supervised learning
in a variety of noise models~\cite{DKZ20, DiakonikolasKPZ21, DK20-Massart-hard, DKKTZ21-benign}.
Interestingly, subsequent work has obtained additional evidence of hardness for some of these problems
via reductions from lattice problems~\cite{BrunaRST21} and variants of the planted clique problem~\cite{BrennanB20}.

\paragraph{Motivation for This Work.}
\new{
Interestingly, the generic SQ lower bound of~\cite{DKS17-sq}
is vacuous for the natural setting
where the distribution $A$ is discrete (in which case, we have $\chi^2(A,N(0,1)) = \infty$) or,
more generally, when $A$ has very large chi-squared norm with the standard Gaussian.
\newblue{More specifically, for the parallel pancakes distribution described above, one needs the ``thickness
parameter'' (corresponding to the eigenvalue of the covariance in the hidden direction)
to be at least inverse exponential in the dimension.}
A natural question,
which served as one of the motivations for this work, is whether  information-computation tradeoffs
persist for the discrete case.

Consider for example the case where $A$ is supported on a discrete domain of size $k$
and matches its first $\Omega(k)$ moments with $N(0, 1)$. This corresponds to the special
case of the parallel pancakes distribution, where the component covariances are degenerate --- 
having zero eigenvalue in the hidden-direction.
Does any efficient algorithm for these instances require $d^{\omega_k(1)}$ samples?

We answer these questions in the negative
by designing a sample and computationally efficient algorithm
for NGCA when $A$ is discrete or nearly discrete
in a well-defined sense (Assumption~\ref{cond:A-disc}).
The key tool leveraged in \new{our reuslt} is the LLL \new{algorithm}~\cite{LLL82}
for lattice basis reduction. We note that prior work~\cite{BrunaRST21, SZB21}
had used the LLL algorithm to obtain efficient learners for related problems
that could be viewed as special cases of NGCA.

\paragraph{Connection with Sum-of-Squares (SoS) and Low-Degree Tests.}
Before we proceed with a detailed description of our results, a final remark is in order.
As already mentioned, the SQ lower bounds of~\cite{DKS17-sq} are vacuous when $A$ is a discrete distribution.
On the other hand, recent work has established information-computation tradeoffs for NGCA when 
$A$ is supported on $\{-1, 0, 1\}$, both for low-degree polynomial tests~\cite{MW21} and for
SoS algorithms~\cite{GhoshJJPR20}. At first sight, these hardness results combined with 
our algorithm might appear to cast doubt
on the validity of the low-degree conjecture~\cite{Hopkins-thesis}.
We note, however, that the latter conjecture only posits that a {\em noisy} version of the corresponding problem
is computationally hard (as opposed to the problem itself) --- a statement that appears to hold true in our setting.
Conceptually, we view our algorithmic contribution as a novel example of an efficient algorithm
(beyond Gaussian elimination) not captured by the aforementioned restricted models of computation.
}

\subsection{Our Contributions} \label{ssec:results}

We consider the NGCA learning problem under the following assumption:

\begin{assm} \label{cond:A-disc}
The distribution $A$ on $\R$ is such that:
\begin{enumerate}
\item\label{LatticeCondition} There exist $r_j \in \R$ for $j \in [k]$ with $|r_j|  = O(1)$,
$B \in \Z_+$, and $\eps>0$ such that
a sample $y \sim A$ is deterministically within additive $\eps$ of some number
of the form $\sum_{j=1}^k n_j r_j$, for $n_j \in \Z$ with $|n_j|\leq B$ for all $j \in [k]$.

\item \label{anticoncentrationCondition}
The distribution $A$ is anti-concentrated around $0$,
specifically $\Pr_{X \sim A}[|X|  > 1/d] > 1/d$.

\item \label{concentrationCondition}
The distribution $A$ is concentrated around $0$,
specifically $\Pr_{X \sim A}[|X| > \poly(d)] < 1/d$.
\end{enumerate}
\end{assm}

\new{Some comments are in order to interpret Assumption~\ref{cond:A-disc}.}
Condition~\ref{LatticeCondition} above is the critical condition requiring
that $A$ is approximately supported on points 
that are (small) integer linear combinations of the $r_j$'s.
\new{This is the key condition that underlies our main technique.
Notice that this condition can be satisfied by any distribution $A$ that has support size at most $k$,
or even a distribution $A$ that is supported on $k$ intervals, each of length at most $\eps$. 
In fact, it is sufficient for $A$ to be $O(1/d)$-close in total variation distance to such a distribution, 
as there will be a constant probability that any $O(d)$ sample set drawn from it are supported on the appropriate intervals. 
This means that our algorithmic results applies, for example, 
to parallel pancake distributions, as long as the thickness of the pancakes 
is no more than $O(\eps/\sqrt{\log(d)})$.}

Conditions~\ref{anticoncentrationCondition} and \ref{concentrationCondition}
are technical conditions that are needed for our particular algorithm to work.
However, note that if Condition~\ref{anticoncentrationCondition} is not satisfied,
then it is reasonably likely that $O(d)$ random samples from $\p_v^A$ will have
much smaller variance in the $v$-direction than in any of the orthogonal directions.
This provides a much easier method for approximating $v$.
Condition \ref{concentrationCondition} is essentially required to guarantee that we do not
need to deal with unlimited precision to approximate points.
However, it is easy to see that if this condition is violated, one can approximate $v$ simply
by normalizing any samples from $\p_v^A$ with $\ell_2$-norm more than $d$.

We prove the following theorem:

\begin{theorem}[Main Result] \label{thm:main-ngca-disc}
Under Assumption~\ref{cond:A-disc}, if $\eps < 2^{-\Omega(d k^2)} B^{-\Omega(k)}$
with sufficiently large implied \new{universal} constants \new{in the big-$\Omega$},
there exists an algorithm that draws $m = 2d$ i.i.d.\ samples from $\p_v^{A}$
for an unknown unit vector $v \in \R^d$, runs in time $\poly(d, \new{k, \log B})$,
and outputs a vector $v^{\ast}$ such that with constant probability
either $\|v^\ast - v\|_2$ is small or $\|v^\ast + v\|_2$ is small.
\end{theorem}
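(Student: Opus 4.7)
The plan is to reduce the recovery problem to a short-vector problem on a carefully designed lattice, and then invoke the LLL algorithm. First I would draw $m = 2d$ samples $x_1,\ldots,x_m \sim \p_v^A$ and form the matrix $X \in \R^{m \times d}$ with rows $x_i^\top$. By Assumption~\ref{cond:A-disc}.\ref{LatticeCondition} each $a_i := v \cdot x_i$ satisfies $|a_i - \sum_{j=1}^k n_{i,j}^\ast r_j| \leq \eps$ for integers $|n_{i,j}^\ast| \leq B$; writing $\mathbf{t}^\ast := (\sum_j n_{i,j}^\ast r_j)_{i\in[m]} \in \R^m$, we have $\|Xv - \mathbf{t}^\ast\|_\infty \leq \eps$ deterministically. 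Equivalently, letting $W \in \R^{m \times d}$ be an orthonormal basis of $\mathrm{col}(X)^\perp$ (of dimension $m-d=d$), we get $\|W^\top \mathbf{t}^\ast\|_2 \leq \sqrt{m}\,\eps$, since $Xv \in \mathrm{col}(X)$. The recovery task reduces to finding a bounded integer vector $\mathbf{n}=(n_{i,j})$ whose associated real vector $\mathbf{t}(\mathbf{n}) := (\sum_j n_{i,j} r_j)_i$ projects to something small under $W^\top$.

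Next I would encode this as a short-vector problem in the lattice $\Lambda \subset \R^{d+mk}$ generated by the $mk$ columns
\[
\mathbf{b}_{i,j} \;=\; \bigl(\, N \, r_j\, W^\top \e_i \,,\; \e_{i,j}\,\bigr), \qquad (i,j) \in [m] \times [k],
\]
where $N$ is a scaling parameter, chosen of order $B/\eps$. A lattice point $\sum n_{i,j} \mathbf{b}_{i,j}$ has squared norm $N^2 \|W^\top \mathbf{t}(\mathbf{n})\|_2^2 + \|\mathbf{n}\|_2^2$, so being short in $\Lambda$ simultaneously certifies that $\mathbf{n}$ is small and that $\mathbf{t}(\mathbf{n})$ projects to nearly zero; the true $\mathbf{n}^\ast$ gives a lattice vector of norm at most $O(B\sqrt{mk})$. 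Running LLL on a sufficiently precise rational approximation of $\Lambda$ (requiring only $\poly(d,k,\log B)$ bits per entry, using Assumption~\ref{cond:A-disc}.\ref{concentrationCondition} to bound $\|X\|$) outputs in time $\poly(d,k,\log B)$ a nonzero lattice vector of norm at most $2^{O(mk)} \cdot B\sqrt{mk} = 2^{O(dk)} B\sqrt{dk}$.

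The crux is a \emph{uniqueness} argument: with constant probability over the Gaussian parts of the samples, the only short vectors of $\Lambda$ correspond to $\pm \mathbf{n}^\ast$. After conditioning on the $a_i$'s (equivalently on $\mathbf{n}^\ast$), the subspace $\mathrm{col}(X)^\perp$ is a uniformly random $d$-dimensional subspace of $\mathbf{a}^\perp$, where $\mathbf{a}=Xv$, because the components of the $x_i$'s orthogonal to $v$ are i.i.d.\ standard Gaussians. For any candidate $\mathbf{n}'$ with $\|\mathbf{n}'\|_\infty \leq B':=2^{O(dk)}B$ that is not a $\pm 1$ multiple of $\mathbf{n}^\ast$, the real vector $\mathbf{t}(\mathbf{n}')$ is not approximately parallel to $\mathbf{a}$, so it retains a definite component orthogonal to $\mathbf{a}$, and the small-ball probability $\Pr[\|W^\top \mathbf{t}(\mathbf{n}')\|_2 < \delta]$ decays like $(C\delta/\|\mathbf{t}(\mathbf{n}')\|_2)^d$. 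A union bound over the $(2B'+1)^{mk}$ candidates is affordable precisely when $\eps < 2^{-\Omega(dk^2)} B^{-\Omega(k)}$, which is exactly the hypothesis; the $k^2$ in the exponent arises from combining the $mk = 2dk$ factor counting integer coordinates with the $\log B' = O(dk+\log B)$ factor and dividing by the $d$ degrees of freedom in the projection.

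Once LLL certifies $\pm \mathbf{n}^\ast$, I would recover $v^\ast$ by solving the overdetermined linear system $X v^\ast = \mathbf{t}^\ast$ via least squares; Assumption~\ref{cond:A-disc}.\ref{anticoncentrationCondition} ensures $X$ is well conditioned with constant probability (so that $\|X^+\|_{\mathrm{op}} = \poly(d)$), yielding $\min(\|v^\ast - v\|_2, \|v^\ast + v\|_2) = \poly(d)\cdot \eps$, which is small under the hypothesis on $\eps$. The sign ambiguity is intrinsic because $(v, \mathbf{n}^\ast) \leftrightarrow (-v, -\mathbf{n}^\ast)$ is a symmetry of the entire setup. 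I expect the hardest part to be the probabilistic uniqueness argument, where one must simultaneously handle (i) the combinatorial blow-up from $(B')^{mk}$ candidate integer vectors combined with LLL's $2^{O(dk)}$ approximation slack, (ii) the small-ball estimate for projections onto a random $d$-plane that in our setting lives inside $\mathbf{a}^\perp$ rather than in all of $\R^m$, and (iii) the bit-complexity bookkeeping required to transfer LLL's real-lattice guarantees to the discretized version.
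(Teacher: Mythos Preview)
Your overall plan---build the lattice with quadratic form $\|n\|_2^2 + N\|\mathrm{Proj}_V(t(n))\|_2^2$, run LLL, then least-squares---matches the paper's. The discrepancy is in the uniqueness step, and it is a genuine gap.

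You claim that any bounded integer $n'$ which is not $\pm n^\ast$ has $t(n')$ retaining a ``definite component orthogonal to $\mathbf{a}$,'' so your small-ball bound $(C\delta/\|t(n')\|_2)^d$ kicks in. This is false for $k\geq 2$. The $r_j$'s are arbitrary reals with $|r_j|=O(1)$; nothing prevents them from satisfying an approximate (or exact) integer relation $\sum_j k_j r_j \approx 0$ with small $|k_j|$. For any such relation and any fixed $i_0$, the vector $n'$ with $n'_{i_0,j}=k_j$ and all other entries zero has $\|t(n')\|_2$ tiny (possibly zero), hence $\|W^\top t(n')\|_2$ is tiny regardless of the randomness in $W$, and $n'$ is a short lattice vector not proportional to $n^\ast$. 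Your union bound gives a vacuous estimate for these vectors and LLL may well return one of them; least squares on $Xv^\ast = t(n')\approx 0$ then outputs $v^\ast\approx 0$, which after normalization is garbage.

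The paper anticipates exactly this failure. Instead of hoping for a unique short vector, it keeps the whole span $W$ of the first $\ell$ LLL basis vectors (those whose Gram--Schmidt norms are below the threshold $mkB^2+Nmk\eps^2$), so that $n^\ast\in W$ is guaranteed. It then introduces a second quadratic form $R(n)=\|t(n)\|_2^2$ and searches over $W$ for a vector maximizing $R/Q$; this filters out the spurious short vectors (which have $R$ small) and isolates one with $t$ close to a nonzero multiple of $y$. A rounding step converts this real maximizer back to an integer lattice point before the final least-squares solve. Your proposal is missing this entire $R/Q$-maximization mechanism, without which the argument only goes through in the $k=1$ case (where $\|t(n)\|_2=|r_1|\,\|n\|_2$ is automatically bounded below).
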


We note that Theorem~\ref{thm:main-ngca-disc} only guarantees an approximation of either $v$ or $-v$.
Such a guarantee may be inherent, as if $A$ is a symmetric distribution we have that $\p_v^A = \p_{-v}^A$.

\subsection{Overview of Techniques} \label{ssec:techniques}

We begin by considering the simple case where the univariate distribution $A$ is supported {\em exactly} on integers.
This special case provides a somewhat simpler version of our algorithm while capturing some of the key ideas.
In this case, we draw $m=d+1$ i.i.d.\ samples \new{$x_i \in \R^d$} from $\p^A_v$ and note that (with high probability)
they will satisfy a unique (up to scaling) linear relation $\sum_{i=1}^m c_i x_i = 0$,
\new{for some $c_i \in \R$ with at least one $c_i \neq 0$.}
In particular, we have that $\sum_{i=1}^m c_i (v \cdot x_i) = 0$. Since the quantities $v \cdot x_i$ for $i \in [m]$ are all integers,
we hope to solve for them by finding the (with high probability unique, up to scaling)
integer linear relation among the $c_i$'s. \new{It turns out that this can be achieved by leveraging
the Lenstra-Lenstra-Lovasz (LLL) lattice basis reduction algorithm.}
Having found an integer solution $\sum_{i=1}^m c_i n_i = 0$ for $n_i \in \Z$,
we can solve the system of linear equations
$v\cdot x_i = n_i$, $i \in [m]$, for \new{the hidden vector} $v$.

We now proceed to deal with the case where $A$ is no longer supported on integers,
but is instead supported on elements that are {\em close} (within some additive error $\eps$) to integers.
In this case, we will similarly have $\sum_{i=1}^m c_i (v\cdot x_i) = 0$, which means that if $n_i$
is the integer \new{closest} to $v \cdot x_i$, we will have that $\sum_{i=1}^m c_i n_i$ is close to $0$.
In order to solve for this \new{near-integer linear-relation}, we make essential use of basic lattice techniques.
In particular, \new{for $n = (n_i)_{i=1}^m \in \Z^m$,} we define the quadratic form
$Q(\new{n}) := \sum_{i=1}^m n_i^2 + N \left( \sum_{i=1}^m c_i n_i \right)^2$,
for some appropriately large $N$. Note that integer vectors with small norm under $Q$
must have $|n_i|$ small for all $i \in [m]$ and have $\left| \sum_{i=1}^m c_i n_i \right|$ be very small.
It is not hard to show that if $\eps$ is sufficiently small and $N$ is chosen appropriately,
with high probability over the samples $x_i$, taking $n_i$ to be the integer closest to $v\cdot x_i$
for each $i \in [m]$ will give substantially the smallest non-zero norm under $Q$.
Therefore, using the LLL algorithm to find an approximate smallest vector will return (some multiple of) this vector.
Given the $n_i$'s, we note that $v\cdot x_i \approx n_i$ for all $i \in [m]$, and we can then
use least-squares regression to solve for an approximation to $v$.

Unfortunately, if the above approach is applied naively, it will work only if $\eps$ is assumed to be exponentially small in $d^2$,
\new{i.e., $2^{-\Omega(d^2)}$,} rather than in $d$.
This is because the LLL algorithm only guarantees a $2^{O(d)}$-approximation to the smallest vector.
The $\sum_{i=1}^m n_i^2$-term in $Q$ ensures that any such vector will have $n_i$ at most exponential \new{in $d$}.
But given that there are $2^{\Omega(d^2)}$ integer vectors with coefficients of this size,
we can expect one to randomly have $\left| \sum_{i=1}^m c_i n_i \right|$ be only $2^{-\Omega(d^2)}$.
This will be distinguishable from the vector we are looking for only if $\eps$ is smaller than this quantity.

In order to fix this issue, instead of taking only $m = d+1$ samples from $\p_v^A$,
we instead draw $m = 2d$ samples. These now have $d$ linear relations and we note that the vector of $n_i$'s
should approximately satisfy all of them. In particular, letting $V$ be the vector space of linear relations satisfied by the $x_i$'s,
we consider the quadratic form defined by $Q(n): = \|n\|_2^2 + N \left\| \mathrm{Proj}_V(n) \right\|_2^2$.
This improves things because it is now much less likely that one of our
$2^{O(d^2)}$ ``small'' $n$'s will randomly have a small projection onto $V$.
This allows us to operate even when $\eps$ is only \new{$2^{-\Omega(d)}$}.
Note that we cannot hope to do much better than this because the LLL algorithm
will still have an exponential gap between the shortest vector and the one that it finds.

Finally, we are also able to extend our algorithm to the setting where the distribution $A$
is not supported on integers, but instead on numbers of the form $\sum_{j=1}^k a_j r_j$,
where the $a_j$'s are (not too large) integers and the $r_j$'s are some $k$ specific (known) real numbers.
In this more general setting, instead of $v\cdot x_i \approx n_i$, we will have that
$v\cdot x_i \approx \sum_{j=1}^k n_{i, j} r_j$, for some integers $n_{i,j}$, \new{$i \in [m], j \in [k]$}.
We then set-up a quadratic form similar to the one before, namely
$Q(n) = \|n\|_2^2 + N \left \| \mathrm{Proj}_V (t) \right \|_2^2$,
where $t = (t_i)_{\new{i \in [m]}}$ is the vector with coordinates $t_i = \sum_{j=1}^k n_{ij} r_j$
for some integers $n_{ij}$. Once again, the correct integer vector $n$
will be an unusually small vector with respect to this quadratic form; and if we can find it,
we will be able to use it to approximate the hidden direction $v$.

\new{A subtle issue in this case is that}
the correct vector (and multiples) need not be the {\em only} small vectors in this lattice.
In particular, if the $r_j$'s satisfy an approximate linear relation $\sum_{j=1}^k k_j r_j \approx 0$,
then letting $n_{ij} = k_j \cdot \delta_{i, i_0}$, for some $i_0$, will also have $\mathrm{Proj}_V(t)$ small,
because $t$ will be small. To deal with this issue, we will need to apply the LLL algorithm
and take not just the single smallest vector, but the smallest few vectors (in a carefully selected way).
We can then show that the true vector $n$ that we are looking for is in the subspace spanned by these vectors.
By finding a lattice vector in this space such that $t$ is large but $\mathrm{Proj}_V(t)$ is small,
we can find a $t$ where each $t_i$ is approximately some multiple of $v\cdot x_i$ for all $i$.
Using this $t$, we can solve for $v$ \new{as before}.

\paragraph{Independent Work.} 
Concurrent and independent work by~\cite{Zadik21LLL} obtained a similar
algorithm for NGCA under similar assumptions, 
by also leveraging the LLL algorthm. \new{More concretely, the algorithm of~\cite{Zadik21LLL}
efficiently solves the NGCA problem when $A$ is a discrete distribution on an integer lattice,
roughly corresponding to the $k=1$ and $\eps=0$ case of our result.}

\section{Proof of Theorem~\ref{thm:main-ngca-disc}}

\medskip

The pseudo-code for our algorithm is given below.

\bigskip

\fbox{\parbox{6.1in}{
{\bf Algorithm} {\tt LLL-based-NGCA}\\
Input:  $m = 2d$ i.i.d.\ samples \
from $\p_v^A$, where $A$ satisfies Assumption~\ref{cond:A-disc} \new{for given real numbers $r_1,r_2,\ldots,r_k$ and some} parameters $k, B, \eps$ with $\eps < 2^{-C' d k^2} B^{-C' k}$,
where $C'>0$ is a sufficiently large constant.  \\
\vspace{-0.5cm}
\begin{enumerate}

\item Let $N = 2^{Cmk^2} B^{Ck}$ be a positive integer, for $C$ a sufficiently large constant,
such that $N < 1/\eps^2$.

\item Let $x_1,x_2,\ldots,x_m$ be $m$ i.i.d. samples from $\p_v^A$ each rounded to the nearest multiple of $\delta = \eps/N^2$.

\item \label{kernelStep} Let $S$ be the $d\times m$ matrix with columns $x_1,x_2,\ldots,x_m$
and $V$ be the right kernel of $S$.

\item \label{latticeStep} Define the quadratic form $Q$ on $\Z^{m\times k}$
such that for an input vector $n = \{n_{i,j}\}_{i \in [m], j \in [k]}$ we have that
$
Q(n) := \sum_{i=1}^m \sum_{j=1}^k n_{i,j}^2 +
N \left\| \mathrm{Proj}_V\left(\left\{\sum_{j=1}^k n_{i,j} r_j \right\}_{i \in [m]} \right) \right \|_2^2 \;.
$
\item \label{LLLStep} Compute a $\delta$-LLL reduced basis, for $\delta = 3/4$, $\{ b_1,b_2,\ldots,b_{mk}\}$ for $Q$,
\new{where $b_i \in \R^{m \times k}$.}

\item \label{GSStep} Apply the Gram-Schmidt \new{orthogonalization} process to the $b_i$'s,
using $Q$ as our norm, \new{to obtain an orthogonal basis $\{b^{\ast}_1,b^{\ast}_2, \ldots, b^{\ast}_{mk}\}$}.

\item Let $\ell \new{\in[m k]}$ be the largest integer such that $Q\new{(b_{\ell}^{\ast})} \leq mkB^2+Nmk\eps^2.$
Let $W$ be the real span of $\{ b_1,b_2,\ldots,b_{\ell} \}$.

\item \label{eigenvalueStep} Consider the quadratic form $R$ on $\R^{m\times k}$ defined by
$R(n) = \sum_{i=1}^m \left( \sum_{j=1}^k n_{i,j}r_j \right)^2$.
For a sufficiently large universal constant $C>0$,
find a vector $w = \new{\{w_{i,j}\}_{i \in [m], j \in [k]}} \in W$ with
$Q(w) = 2^{Cmk} \, B^2$ such that $R(w)/Q(w)$ is approximately maximized.
Note that this can be done with an eigenvalue computation.

\item Write the vector $w$ in the form  $w =  \sum_{i=1}^\ell c_i b_i$, for some $c_i \in \R$.
Let $w'  \new{=}  \sum_{i=1}^\ell c'_i b_i$, where $c_i'$ is the nearest integer to $c_i$.

\item \label{LeastSquaresStep} Let $v^{\ast}$ be the minimizer of $\sum_{i=1}^m \left( v^\ast \cdot x_i - \sum_{j=1}^k w'_{i,j} r_j \right)^2.$
Note that this can be found using least squares regression.
Return the normalization of $v^\ast$.
\end{enumerate}
}}

\bigskip

To begin the analysis, we first analyze the infinite precision version of this problem,
ignoring the rounding and instead simply letting the $x_i$'s be i.i.d.\ samples from $\p_v^A$.

We begin by analyzing some of the basic properties of the above procedure.
We start by showing that with high probability over our set of samples \new{$x_1, \ldots, x_m$}
\new{the quadratic form} $Q(n)$ is small if and only if
the vector $y^{\new{(n)}} \new{\in \R^m}$ with coordinates $\sum_{j=1}^k n_{i,j} r_j$, \new{$i \in [m]$,}
is approximately a multiple of the vector $y \new{\in \R^m}$ with coordinates $v\cdot x_i$, \new{$i \in [m]$}.

Specifically, we prove the following lemma:

\begin{lemma}\label{QBoundsLem}
Let $y\in \R^m$ be the vector with coordinates $y_i := v\cdot x_i$, $i \in [m]$.
Consider a \new{vector} $\new{n =} (n_{i, j}) \in \R^{m \times k}$.
Let $y^{\new{(n)}} \new{\in \R^m}$ be the vector with \new{coordinates $(y^{\new{(n)}})_i: = \sum_{j=1}^k n_{i,j} r_j$,
$i \in [m]$, and  let $(y^{\new{(n)}})' \new{\in \R^m}$ be the component of $y^{\new{(n)}}$ orthogonal to $y$.}
Then we have that
\begin{equation}\label{QUpperBoundEqn}
Q(n) \leq \|n\|_2^2 + N \, \|(y^{\new{(n)}})' \|_2^2 \;.
\end{equation}
Furthermore, for any positive integer $M$ and with high probability over the choice of $S = \new{[x_i]_{i=1}^m}$,
for all such $n$ with $\|n\|_2 \leq M \, m \, k$, we have that:
\begin{equation}\label{QLowerBoundEqn}
Q(n) \geq N \, O(M)^{-mk/(m-d)} \, \|(y^{\new{(n)}})'\|_2^2 \;.
\end{equation}
\end{lemma}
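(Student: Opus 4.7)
The upper bound \eqref{QUpperBoundEqn} is immediate once we observe that $y = S^\top v$ lies in the row span of $S$ and is therefore orthogonal to $V = \ker S$. Decomposing $y^{(n)} = \alpha\, y + (y^{(n)})'$ with $\alpha = \langle y^{(n)}, y\rangle/\|y\|_2^2$ and $(y^{(n)})' \perp y$, the projection satisfies $\mathrm{Proj}_V(y^{(n)}) = \mathrm{Proj}_V((y^{(n)})')$, hence $\|\mathrm{Proj}_V(y^{(n)})\|_2 \leq \|(y^{(n)})'\|_2$; substituting into the definition of $Q$ yields \eqref{QUpperBoundEqn} at once.

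For the lower bound, note that $Q(n) \geq N\,\|\mathrm{Proj}_V((y^{(n)})')\|_2^2$, so it suffices to show the uniform estimate $\|\mathrm{Proj}_V((y^{(n)})')\|_2^2 \geq O(M)^{-mk/(m-d)}\,\|(y^{(n)})'\|_2^2$ simultaneously for every integer $n$ with $\|n\|_2 \leq M\,m\,k$. My strategy is threefold: (i) obtain a clean distributional description of $V$ conditional on $y$; (ii) invoke anti-concentration of random-subspace projections; (iii) union-bound over the finite relevant integer lattice.

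For (i), write $x_i = y_i v + z_i$ with $z_i$ i.i.d.\ standard Gaussian in $v^\perp$, and let $Z$ denote the $(d-1)\times m$ matrix whose columns record the $z_i$'s in an orthonormal basis of $v^\perp$. A direct computation gives $V = y^\perp \cap \ker Z$. Since the law of $Z$ is invariant under right multiplication by any orthogonal transformation of $\R^m$ fixing $y$, the same invariance descends to $V$; hence, conditional on $y$, the subspace $V$ is Haar-uniform among $(m-d)$-dimensional subspaces of the $(m-1)$-dimensional space $y^\perp$. For (ii), the standard Haar-projection calculation gives, for any fixed unit $u \in y^\perp$, that $\|\mathrm{Proj}_V u\|_2^2 \sim \mathrm{Beta}((m-d)/2,(d-1)/2)$, so that $\Pr[\|\mathrm{Proj}_V u\|_2^2 < t] \leq (Ct)^{(m-d)/2}$ for a universal constant $C$. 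For (iii), the number of $n\in\Z^{m\times k}$ with $\|n\|_2 \leq Mmk$ is at most $(O(Mmk))^{mk}$; applying the anti-concentration bound to $u = (y^{(n)})'/\|(y^{(n)})'\|_2$ (which is deterministic given $n$ and $y$) and union-bounding, the choice $t$ of order $O(M)^{-mk/(m-d)}$ with a sufficiently large implicit constant absorbs the polynomial $mk$-factors and the factor of two from $(m-d)/2$ in the exponent, delivering the required uniform lower bound.

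The substantive obstacle is justifying the Haar-uniform distribution of $V$ given $y$: because $y$ and $V$ are both functions of the same samples, some decoupling argument is needed, and it is the invariance of the $z_i$'s under orthogonal maps fixing $y$ that supplies it. This decoupling is essential, since only the clean Beta law it produces has polynomial-in-$t$ anti-concentration near zero sharp enough to survive the $(O(Mmk))^{mk}$-sized union bound over the lattice; everything after that reduces to careful bookkeeping of constants in the exponent.
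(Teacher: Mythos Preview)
Your proposal is correct and follows essentially the same strategy as the paper: both proofs use that $V\subset y^\perp$ for the upper bound, then show that conditioned on $y$ the subspace $V$ is uniformly distributed among $(m-d)$-dimensional subspaces of $y^\perp$ (via rotational invariance of the Gaussian part of the samples), apply an anti-concentration bound for the projection of a fixed vector onto a random subspace, and union-bound over the integer lattice. The only difference is in how the anti-concentration is computed in step~(ii): the paper realizes $V$ as the span of $m-d$ independent orthogonal-to-$y$ Gaussians $g_1,\ldots,g_{m-d}$ and uses the elementary inequality $\|\mathrm{Proj}_V(y^{(n)})\|_2 \ge \max_i |g_i\cdot y^{(n)}|/\|g_i\|_2$ together with independence, whereas you invoke the exact $\mathrm{Beta}((m-d)/2,(d-1)/2)$ law for $\|\mathrm{Proj}_V u\|_2^2$; both yield the same $O(\delta)^{m-d}$ tail.

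One small arithmetical point: with the Beta tail $\Pr[\|\mathrm{Proj}_V u\|_2^2<t]\le (Ct)^{(m-d)/2}$, the union bound over $(O(M))^{mk}$ lattice points forces $t$ of order $O(M)^{-2mk/(m-d)}$, not $O(M)^{-mk/(m-d)}$; the factor of two in the exponent cannot be absorbed into the implicit constant of $O(M)$ as you suggest. The paper's own argument yields the same exponent $-2mk/(m-d)$ once one tracks the square in $Q$, so the stated bound \eqref{QLowerBoundEqn} appears to be off by this harmless factor of two in both treatments; it does not affect the downstream application, since the discrepancy is absorbed into the choice of $N$ and the constant $C'$ in the hypothesis on~$\eps$.
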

\begin{proof}
We start by writing $S$ using an orthonormal basis for $\R^d$
in which $v$ is the first vector.
(Note that changing the basis we use for $\R^d$ does not change $V$.)
In this basis, observe that $y$ is the first row of $S$.
Moreover, all other entries of $S$ in this basis are independent standard Gaussians.
Thus, we can write $S$ as $\left[\begin{matrix} y \\ G\end{matrix} \right]$,
where $G$ is an independent Gaussian matrix. Note that the kernel of $G$ is a random subspace of $\R^m$
of dimension $m-d+1$. Thus, after conditioning on $y$,
$V$ is a random $(m-d)$-dimensional subspace orthogonal to $y$.
Also note that we can generate a subspace with the same distribution
by taking the span of $m-d$ independent standard orthogonal-to-$y$ Gaussians.

We next consider a vector $\new{n} = (n_{i,j}) \in \R^{m\times k}$.
Let $y^{\new{(n)}} \in \R^m$ be the vector with coordinates $\sum_{j=1}^k n_{i,j} r_j$,
and let $(y^{\new{(n)}})' \new{\in \R^m}$ be the component of $y^{\new{(n)}} $ orthogonal to $y$.
To begin with, we note that $\|\mathrm{Proj}_V(y^{\new{(n)}})\|_2 \leq \|(y^{\new{(n)}})' \|_2$,
and this implies Equation \eqref{QUpperBoundEqn}.

We prove Equation \eqref{QLowerBoundEqn} by a union bound over the $O(M)^{mk}$
many vectors of appropriate norm. In particular, fix such an $n$.
Recall that conditioning on $y$, the kernel of $v$ is the span of $g_1,g_2,\ldots,g_{m-d}$,
where the $g_i$ are independent orthogonal-to-$y$ Gaussians.
Note that $\|\mathrm{Proj}_V(y^{\new{(n)}} )\|_2 \geq \max_i |g_i \cdot y^{\new{(n)}} |/\|g_i \|_2.$
Note that $g_i \cdot y^{\new{(n)}}$ is distributed like a Gaussian with standard deviation $\|(y^{\new{(n)}})' \|_2$.
For $\delta > 0$, it is not hard to see that for each $i$ we have that
$|g_i \cdot y^{\new{(n)}}|/\|g_i \|_2 < \delta/\sqrt{mk}$ with probability $O(\delta)$
(for example because the probability that $|g_i \cdot y^{\new{(n)}}| < t\delta$
and $\|g_i\|_2 > (t-1)\sqrt{mk}$ is $O(\delta/t^2)$ for any positive integer $t$).
Thus, the probability that $\|\mathrm{Proj}_V(y^{\new{(n)}})\|_2 < \delta/\sqrt{mk}$
is at most $O(\delta)^{m-d}$. Letting $\delta$ be equal to $(CM)^{-mk/(m-d)}$,
for a sufficiently large constant $C$, yields the result.
\end{proof}

We will henceforth assume that \new{the high probability conclusion of}
Lemma \ref{QBoundsLem} holds for the samples our algorithm has selected
with $M := 2^{2 C m k} B$, for $C>0$ a sufficiently large universal constant.
Given this \new{assumption}, we next need to analyze which vectors give us small values of $Q$
and what this means about the output of our call to the LLL algorithm.
In particular, there is a particular vector $n^{\ast}$ that would cause $t$ to approximate $y$.
We claim that $Q(n^{\ast})$ is small and that this in turn implies that $n^{\ast}$
is an integer linear combination of $b_1,b_2,\ldots,b_\ell$.

\new{By assumption}, each $y_i = v\cdot x_i$ is within additive $\eps$ of $\sum_{j=1}^k n^{\ast}_{i,j} r_j$,
\new{for some $n^{\ast}_{i, j} \in \Z$}.
Combining these $n^{\ast}_{i,j}$'s, we get a single vector $n^{\ast} \new{= (n^{\ast}_{ij})_{i \in [m], j \in [k]} \in \Z^{m \times k}}$
which has all entries with absolute value at most $B$,
and by Lemma \ref{QBoundsLem} satisfies $Q(n^{\ast}) \leq m k B^2 + N m k \eps^2$.
Note that $n^{\ast}$ is a linear combination of the $b_i$'s, namely
$n^{\ast} = \sum_{i=1}^{m k} c_i b_i$.
Let $t$ be the largest $i$ such that $c_i \neq 0$.
Note that we can also write $n^{\ast}$ as $\sum_{i=1}^{m k} c_i' b_i^{\ast}$, for some real $c_i'$,
and that $c_t' = c_t$. Since the $b_i^{\ast}$ are orthogonal with respect to the quadratic form $Q$,
this implies that
$$Q(n^{\ast}) \geq Q(c_t' b_t^{\ast}) \geq Q(b_t^{\ast}) \;.$$
In particular, this means that $Q(b_t^{\ast}) \leq m k B^2+ N m k \eps^2$. By our choice of $\ell$,
this implies that $t \leq \ell$, and in particular that $n^{\ast}$ is a linear combination of
$b_1,b_2,\ldots,b_\ell$.

Unfortunately, we cannot necessarily find $n^{\ast}$ within this subspace.
However, it will suffice for our purposes to find a vector $z$ for which $\|z\|_2$ is large,
but the part of $z$ orthogonal to $y$ is small. To do this, it will suffice to find an integer vector $n$
for which $R(n)$ is large (implying that $\|y^{\new{(n)}}\|_2$ is large), but for which $Q(n)$ is small
(implying that $y^{\new{(n)}}$ is nearly orthogonal to $y$).
We know that $n^{\ast}$ is such a vector and that it is somewhere in $W$.
It now remains to find it.

Note that $n^{\ast} \in W$.
Note that $R(n^{\ast}) = \|y^{\new{(n^\ast}}) \|_2^2 \geq \|y\|_2^2/2 + O(m\eps^2).$
By the anti-concentration Condition \ref{anticoncentrationCondition},
with constant probability over the choice of $y$, this is $\Omega(1/d^2)$.
On the other hand, we have that $Q(n^{\ast}) \leq m k (B^2+N \eps^2)$.
Therefore, we have that
$$R(n^{\ast})/Q(n^{\ast}) \geq \Omega(1/(d^2 mk B^2)) \;.$$
Given our algorithm's choice of $w$, we have that
$R(w)/Q(w) \geq \Omega(1/(d^2 mk B^2)).$
On the other hand, we note that for $i\leq \ell$ we have that
$$Q(b_i) \leq 2^{mk} Q(b_{\ell}^{\ast}) \leq 2^{m k}(mkB^2) \;.$$
This in particular follows from the fact that $Q(b_{i+1}^{\ast}) \geq Q(b_i^\ast)/2$ for positive integers $i$.
The latter statement can be derived, for example, from p.\ 86 of~\cite{Cohen-nt-book}).
This means that $\|b_i \|_2^2 \leq 2^{m k}(m k B^2)$,
and thus $R(b_i) \leq 2^{O(m k)} B^2$. This implies that
$R(w-w') \leq 2^{O(m k)}(B^2+N\eps^2).$
However, since $Q(b_i) \leq 2^{mk}(mkB^2)$,
by similar reasoning, we obtain that $Q(w-w') \leq 2^{O(mk)} B^2$.
Together, this implies that $\|w'\|_2^2 \leq Q(w') = \Theta(2^{C m k} B^2)$,
and that
$$R(w')/Q(w') = \Omega(1/(d^2 mk B^2)) \;.$$
Assuming the high probability statement of Lemma \ref{QBoundsLem}
with $M := 2^{2Cmk}B$, we have that
$$Q(w') \geq N 2^{-O(m^2k^2/(m-d))}B^{-mk/(m-d)}\|\new{(y^{(w')})'}\|_2^2 \;.$$
This implies that $\|\new{(y^{(w')})'} \|_2^2 \leq N^{-1} 2^{O(m^2k^2/(m-d))}B^{O(mk/(m-d))}.$
Note that this means that the vector with coordinates $\sum_{j=1}^k w'_{i,j} r_j$
is within $N^{-1} 2^{O(m^2k^2/(m-d))}B^{O(mk/(m-d))}$ of some multiple of $y$.
Thus, taking $v^{\ast}$ to be an appropriate multiple of $v$ yields an error of at most
$N^{-1} 2^{O(m^2k^2/(m-d))}B^{O(mk/(m-d))}$ in the defining equation of $v^{\ast}$.

We next need to determine how close the above implies that $v^{\ast}$ will be to a multiple of $v$.
To analyze this, we consider the eigenvalues of the matrix $\sum_{i=1}^m x_ix_i^T$.
By Condition \ref{anticoncentrationCondition}, with large constant probability,
the eigenvalue in the $v$-direction will be at least $\Omega(1/d^2)$.
As the $x_i$'s in orthogonal directions are independent standard Gaussians,
it is not hard to see (for example via a cover argument) that with this large constant probability
all eigenvalues of $\sum_{i=1}^m x_ix_i^T$ are at least $\Omega(1/d^2)$.
The error in \new{the least-squares regression problem} equals
$$
\sum_{i=1}^m \left( v^\ast \cdot x_i - \sum_{j=1}^k w_{i,j}' r_j \right)^2
= \sum_{i=1}^m \left( v^\ast \cdot x_i - \alpha y_i - (\new{(y^{(w')})'})_i \right)^2 \;,
$$
where $\alpha$ is some real multiple.
Notice that $v^\ast \cdot x_i - \alpha y_i = (v^\ast - \alpha v)\cdot x_i$.
Therefore the above is
$$
(v^\ast - \alpha v)^T \sum_{i=1}^m x_i x_i^T (v^\ast - \alpha v) +
 O\left( (m/d) \|\new{(y^{(w')})'} \|_2^2 + (m/d) \, \|\new{(y^{(w')})'}\|_2  \, \|v^\ast - \alpha v\|_2 \right) \;.
$$
In particular, noting that setting $v^\ast = \alpha v$ obtains a value of
$O(m/d) \left( N^{-1} 2^{O(m^2k^2/(m-d))}B^{O(mk/(m-d))} \right)^2$,
the true $v^{\ast}$ must satisfy
$$\|v^\ast - \alpha v\|_2 \leq N^{-1} 2^{O(m^2k^2/(m-d))}B^{O(mk/(m-d))} \;.$$
On the other hand, since $R(w') > 1$, we have that $\|\new{(y^{(w')})'} \|_2^2 > 1$,
which (assuming that all $x_i$'s have norm $O(\sqrt{d})$ \new{which holds with high probability})
implies that $\|v^{\ast} \|_2 \gg 1/\sqrt{d}$. This means by the above that the normalization of $v^{\ast}$
is within \new{$\ell_2$-error}
$$N^{-1} 2^{O(m^2k^2/(m-d))}B^{O(mk/(m-d))}$$
of $\pm v$.

Since we have selected $m=2d$, $N = 2^{Cmk^2} B^{Ck}$, for $C$ a sufficiently large constant,
and $\eps < 2^{-C' d k^2} B^{-C' k}$, for some sufficiently large constant $C'$,
it follows that the normalization of $v^{\ast}$ 
is exponentially close to $\pm \new{v}$.

Next we need to show that rounding the $x_i$'s does not affect the correctness of our procedure.
For this, we note that the above analysis only needed the following facts about the $x_i$:
\begin{enumerate}
\item Lemma \ref{QBoundsLem} holds for $M=2^{2Cmk}B$.
\item \label{nonsingularCondition} $\sum_{i=1}^m x_i x_i^T \succeq \Omega(I/d^2)$.
\item \label{closeToLatticeCondition} $v\cdot x_i$ is within $2^{-\Omega( d k^2)} B^{-\Omega( k)}$ 
(with sufficiently large constants in the big-$\Omega$) of some integer linear combination of the $r_i$'s
with coefficients of absolute value at most $B$ for all $i$.
\end{enumerate}
We note that these hold with reasonable probability by the above.
We claim that if they hold for the unrounded $x_i$'s and if the $x_i$'s have absolute value at most $\poly(d)$
(which happens with constant probability by Condition \ref{concentrationCondition}),
then they hold for the rounded $x_i$'s,
perhaps with slightly worse implied constants in the big-$O$ and big-$\Omega$ terms.

To show this, we begin with Condition \ref{closeToLatticeCondition}.
This still holds since rounding an $x_i$ changes the value of $v\cdot x_i$ by at most $d\delta < \eps$.

For Condition \ref{nonsingularCondition}, we note that changing each coordinate of $x_i$ by $\delta$
changes $\sum_{i=1}^m x_ix_i^T$ by at most $d m\delta \max_i \|x_i\|_2$ in Frobenius norm. 
As this is much less than $1/d^2$, the minimum eigenvector of $\sum_{i=1}^m x_ix_i^T$ is still large enough after the rounding.

Finally, for Lemma \ref{QBoundsLem}, we note that the argument for Equation~\eqref{QUpperBoundEqn} still applies.
For Equation \eqref{QLowerBoundEqn}, we note that for a vector $z$, $\mathrm{Proj}_V(z) = z-t$,
where $t$ is the unique vector in the range of $S^T$ such that $St = Sz$.
From this, we conclude that $t = S^T (S S^T)^{-1} S z$.
We claim that rounding this does not change the value of $t$
(or, therefore, the value of $\|t-z\|_2^2$) by much.
In particular, it is easy to see that rounding changes $S$ and $S^T$ by $O(md \delta)$ in Frobenius norm.
The effect on $(S S^T)^{-1}$ is more complicated;
but we know that $SS^T = \sum_i x_i x_i^T \succeq \Omega((1/d^2)) \ I$.
This and the fact that the rounding changes $SS^T$ by relatively little in terms of Frobenius norm,
suffices to imply that the rounding does not change much the value of $Q(n)$,
for any vector $n$ with coefficients of absolute value $M$.

Having established correctness, we need to bound the runtime.
This is relatively straightforward, as we have to solve problems in dimension $\poly(md)$ with $\poly(md\log(B))$ bits of precision.
In particular, Step~\ref{kernelStep} boils down to row-reduction; Step~\ref{latticeStep} requires computing a projection matrix;
Step~\ref{LLLStep} uses the LLL algorithm; Step~\ref{eigenvalueStep} can be done via an approximate eigenvalue computation;
and Step~\ref{LeastSquaresStep} is least squares. Each of these operations can be performed in time that is polynomial
in the dimension of the problem and in the number of bits of precision required.

This completes the proof of Theorem~\ref{thm:main-ngca-disc}. \qed

\begin{remark}
{\em We remark that our algorithm works with any number $m>d$ samples, 
as long as $\eps$ is less than $2^{-\Omega( d k^2 m / (m-d))} B^{-\Omega(k m/(m-d))}$ 
for sufficiently large constants in the big-$\Omega$'s. 
For example, one could take $m=d+1$ samples, as long as $\eps < 2^{-C'd^2k^2}B^{-C' dk}$.}
\end{remark}

\paragraph{Acknowledgements.}
We thanks Sam Hopkins and Aaron Potechin for useful discussions about the low-degree
conjecture and Sum-of-Squares lower bounds.

\bibliographystyle{alpha}
\bibliography{allrefs}

\end{document}